\theoremstyle{plain}
\numberwithin{equation}{section}
\newtheorem{thm}{Theorem}[section]
\newtheorem{lem}[thm]{Lemma}
\newtheorem{cor}[thm]{Corollary}
\newcommand{\complex}{{\mathbb C}}
\newcommand{\positive}{{\mathbb N}}
\newcommand{\real}{{\mathbb R}}
\newcommand{\ascript}{{\mathcal A}}
\newcommand{\cscript}{{\mathcal C}}
\newcommand{\dscript}{{\mathcal D}}
\newcommand{\pscript}{{\mathcal P}}
\newcommand{\qscript}{{\mathcal Q}}
\newcommand{\rscript}{{\mathcal R}}
\newcommand{\sscript}{{\mathcal S}}
\newcommand{\tscript}{{\mathcal T}}
\newcommand{\rmtr}{\mathrm{tr}}
\newcommand{\rmcyl}{\mathrm{cyl}}
\newcommand{\rmim}{\mathop{Im}}
\newcommand{\ahat}{\widehat{a}}
\newcommand{\dhat}{\widehat{\dscript}}
\newcommand{\rhat}{\widehat{\rscript}}
\newcommand{\that}{\widehat{\tscript}}
\newcommand{\phat}{\widehat{p}}
\newcommand{\mutilde}{\widetilde{\mu}}
\newcommand{\ab}[1]{\left|#1\right|}
\newcommand{\doubleab}[1]{\left\|#1\right\|}
\newcommand{\brac}[1]{\left\{#1\right\}}
\newcommand{\paren}[1]{\left(#1\right)}
\newcommand{\sqbrac}[1]{\left[#1\right]}
\newcommand{\elbows}[1]{{\left\langle#1\right\rangle}}
\newcommand{\ket}[1]{{\left|#1\right>}}
\newcommand{\bra}[1]{{\left<#1\right|}}
\begin{document}

\title{A DYNAMICS FOR\\DISCRETE QUANTUM GRAVITY
}
\author{S. Gudder\\ Department of Mathematics\\
University of Denver\\ Denver, Colorado 80208, U.S.A.\\
sgudder@du.edu
}
\date{}
\maketitle

\begin{abstract}
This paper is based on the causal set approach to discrete quantum gravity. We first describe a classical sequential growth process (CSGP) in which the universe grows one element at a time in discrete steps. At each step the process has the form of a causal set (causet) and the ``completed'' universe is given by a path through a discretely growing chain of causets. We then quantize the CSGP by forming a Hilbert space $H$ on the set of paths. The quantum dynamics is governed by a sequence of positive operators $\rho _n$ on $H$ that satisfy normalization and consistency conditions. The pair $\paren{H,\brac{\rho _n}}$ is called a quantum sequential growth process (QSGP). We next discuss a concrete realization of a QSGP in terms of a natural quantum action. This gives an amplitude process related to the ``sum over histories'' approach to quantum mechanics. Finally, we briefly discuss a discrete form of Einstein's field equation and speculate how this may be employed to compare the present framework with classical general relativity theory.
\end{abstract}

\section{Introduction}  
This paper builds on the causal set approach to discrete quantum gravity \cite{blms87, sor03} and we refer the reader to \cite{hen09, sur11} for more details and motivation. The origins of this approach stem from studies of the causal structure
$(M,<)$ of a Lorentzian space-time manifold $(M,g)$. For $a,b\in M$ we write $a<b$ if $b$ is in the causal future of $a$. If
$a\le b$ or $b\le a$ we say that $a$ and $b$ are \textit{comparable} and otherwise $a$ and $b$ are \textit{incomparable}. If there are no closed causal curves in $(M,g)$, then $(M,<)$ is a partially ordered set (poset). It has been shown that $(M,<)$ determines the topology and even the differential structure of the manifold $(M,g)$ \cite{sor03, sur11} and it is believed that the order structure $(M,<)$ provides a viable candidate for describing a discrete quantum gravity.

To remind us that we are dealing with a causal structure, we call a finite poset a \textit{causet}. A causet is assumed to be unlabeled and isomorphic causets are identified. In the literature, causets are frequently labeled according to the order of ``birth'' and this causes complications because covariant properties are independent of labeling \cite{hen09, sor03, sur11}. In this way, our causets are automatically covariant.

Section~2 describes a classical sequential growth process in which the universe grows one element at a time in discrete steps. At each step, the process has the form of a causet and the ``completed'' universe is given by a path through a discretely growing chain of causets. The transition probability at each step is given by an expression due to Rideout-Sorkin 
\cite{rs00, vr06}. Letting $\Omega$ be the set of paths, $\ascript$ be the $\sigma$-algebra generated by cylinder sets and
$\nu _c$ the probability measure determined by the transition probabilities, the classical dynamics is described by a Markov chain in the probability space $(\Omega ,\ascript ,\nu _c)$.

In Section~3 we quantize the classical framework by forming the Hilbert space $H=L_2(\Omega ,\ascript ,\nu _c)$. The quantum dynamics is governed by a sequence of positive operators $\rho _n$ on $H$ that satisfy normalization and consistency conditions. We employ $\rho _n$ to construct decoherence functionals and a quantum measure $\mu$ on a ``quadratic algebra'' $\sscript$ of subsets of $\Omega$. In general, $\sscript$ is between the collection of cylinder sets and
$\ascript$. We then nominate $(\Omega ,\sscript ,\mu )$ as a candidate model for a discrete quantum gravity.

Section~4 discusses a concrete realization of the quantum sequential growth process $\rho _n$ considered in Section~3. This realization is given in terms of a natural quantum action and is called an amplitude process. The amplitude process is related to the ``sum over histories'' approach to quantum mechanics \cite{hen09, sur11}. Section~5 briefly discusses a discrete form of Einstein's field equation and speculates how this may be employed to compare the present framework with classical general relativity theory.

\section{Classical Sequential Growth Processes} 
Let $\pscript _n$ be the collection of all causets of cardinality $n$, $n=1,2,\ldots$, and let $\pscript =\cup\pscript _n$ be the collection of all causets. An element $a\in x$ for $x\in\pscript$ is \textit{maximal} if there is no $b\in x$ with $a<b$. If
$x\in\pscript _n$, $y\in\pscript _{n+1}$, then $x$ \textit{produces} $y$ if $y$ is obtained from $x$ by adjoining a single maximal element $a$ to $x$. In this case we write $y=x\shortuparrow a$ and use the notation $x\to y$. If $x\to y$, we also say that
$x$ is a \textit{producer} of $y$ and $y$ is an \textit{offspring} of $x$. Of course, $x$ may produce many offspring and a causet may be the offspring of many producers. Moreover, $x$ may produce $y$ in various isomorphic ways. To describe this precisely, we consider labeled causets where a \textit{labeling} of $x\in\pscript _n$ is a function
$\ell\colon x\to\brac{1,2,\ldots ,n}$ such that $a,b\in x$ with $a<b$ implies $\ell (a)<\ell (b)$. In Figure~1, the labeled causet $x$ produces the labeled causets $u,v,w$. In this paper we mainly consider unlabeled causets (which we simply call causets) and identify isomorphic copies of a causet so we identify $u,v,w$ and say the \textit{multiplicity} of $x\to u$ is three and write
$m(x\to u)=3$. We then replace Figure~1 by the simpler Figure~2.

\setlength{\unitlength}{8pt}
\begin{picture}(5,25)
\put(12,8){\circle{7}}   
\put(10.5,8){\circle*{.35}}
\put(10.25,7){\makebox{$\scriptstyle 1$}}
\put(12,8){\circle*{.35}}
\put(11.75,7){\makebox{$\scriptstyle 2$}}
\put(13.5,8){\circle*{.35}}
\put(13.25,7){\makebox{$\scriptstyle 3$}}
\put(12,4){\makebox{$x$}}
\put(12,10.5){\vector(0,1){4}}
\put(9.5,8){\vector(-1,1){7.5}}
\put(14.5,8){\vector(1,1){7.5}}
\put(-1.5,18){\makebox{$\scriptstyle 4$}}
\put(0,17){\circle{7}}   
\put(-1.5,16.2){\circle*{.35}}
\put(-1.7,15.2){\makebox{$\scriptstyle 1$}}
\put(0,16.2){\circle*{.35}}
\put(-.2,15.2){\makebox{$\scriptstyle 2$}}
\put(1.5,16.2){\circle*{.35}}
\put(1.2,15.2){\makebox{$\scriptstyle 3$}}
\put(-1.5,16.2){\line(1,2){.8}}
\put(0,16.2){\line(-1,2){.8}}
\put(-.75,17.85){\circle*{.35}}
\put(-4,16){\makebox{$u$}}
\put(12,17){\circle{7}}   
\put(10.5,16.2){\circle*{.35}}
\put(10.3,15.2){\makebox{$\scriptstyle 1$}}
\put(12,16.2){\circle*{.35}}
\put(11.75,15.2){\makebox{$\scriptstyle 2$}}
\put(13.5,16.2){\circle*{.35}}
\put(13.2,15.2){\makebox{$\scriptstyle 3$}}
\put(13.5,16.2){\line(-1,2){.8}}
\put(12,16.2){\line(1,2){.8}}
\put(12.75,17.8){\circle*{.35}}
\put(11.5,18){\makebox{$\scriptstyle 4$}}
\put(8,16){\makebox{$v$}}
\put(24,17){\circle{7}}   
\put(22.5,16.2){\circle*{.35}}
\put(22.5,15.2){\makebox{$\scriptstyle 1$}}
\put(24,16.2){\circle*{.35}}
\put(23.8,15.2){\makebox{$\scriptstyle 2$}}
\put(25.5,16.2){\circle*{.35}}
\put(25.2,15.2){\makebox{$\scriptstyle 3$}}
\put(22.5,16.2){\line(1,1){1.5}}
\put(25.5,16.2){\line(-1,1){1.5}}
\put(24,17.7){\circle*{.35}}
\put(23,18){\makebox{$\scriptstyle 4$}}
\put(20,16){\makebox{$w$}}
\end{picture}
\begin{picture}(25,12)
\put(32,8){\circle{7}}   
\put(30.5,8){\circle*{.35}}
\put(32,8){\circle*{.35}}
\put(33.5,8){\circle*{.35}}
\put(32,4){\makebox{$x$}}
\put(32,10.5){\vector(0,1){4}}
\put(32,17){\circle{7}}   
\put(30.5,16.2){\circle*{.35}}
\put(32,16.2){\circle*{.35}}
\put(33.5,16.2){\circle*{.35}}
\put(30.5,16.2){\line(1,2){.8}}
\put(32,16.2){\line(-1,2){.8}}
\put(31.25,17.8){\circle*{.35}}
\put(28,16){\makebox{$u$}}
\hskip 2pc{\textbf{Figure 1}\hskip 13pc\textbf{Figure 2}}
\end{picture}
\bigskip

The transitive closure of ${}\to$ makes $\pscript$ into a poset itself and we call $(\pscript ,\to)$ the \textit{causet growth process}. A  \textit{path} in $\pscript$ is a sequence (string) $\omega _1\omega _2\cdots$, where $\omega _i\in\pscript _i$ and
$\omega _i\to\omega _{i+1}$, $i=1,2,\ldots$. An $n$-path in $\pscript$ is a finite string $\omega _1\omega _2\cdots\omega _n$ where, again, $\omega _i\in\pscript$ and $\omega _i\to\omega _{i+1}$. We denote the set of paths by $\Omega$ and the set of $n$-paths by $\Omega _n$.

If $a,b\in x$ with $x\in\pscript$, we say that $a$ is an \textit{ancestor} of $b$ and $b$ is a \textit{successor} of $a$ if $a<b$. We say that $a$ is a \textit{parent} of $b$ and $b$ is a \textit{child} of $a$ if $a<b$ and there is no $c\in x$ with $a<c<b$. A
\textit{chain} in $x$ is a set of mutually comparable elements of $x$ and an \textit{antichain} in $x$ is a set of mutually incomparable elements of $x$. By convention, the empty set is both a chain and an antichain and clearly singleton sets also have this property. In Figure~2, any subset of $x$ is an antichain while $u$ has two chains of cardinality 2.

\begin{thm}       
\label{thm21}
If $x$ is a labeled causet, then the number of labeled offspring of $x$ is the number of distinct antichains in $x$.
\end{thm}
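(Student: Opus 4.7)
The plan is to establish an explicit bijection between the set of antichains of $x$ and the set of labeled offspring of $x$. Let $n = |x|$, so any offspring $y$ has exactly $n+1$ elements, and in the labeling of $y$ the adjoined maximal element $a$ must receive label $n+1$ (because it is greater than nothing and because every $b \in x$ already has a smaller label while $a$ is required to be maximal).

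The forward map $\Phi$ sends an antichain $A \subseteq x$ to the offspring $y_A = x \shortuparrow a$ in which the order is obtained by keeping the order of $x$, giving $a$ label $n+1$, and declaring $b < a$ in $y_A$ iff $b \le a'$ in $x$ for some $a' \in A$. I would first check that this really produces a labeled causet: transitivity involving $a$ is immediate since the set of elements below $a$ is by construction the order ideal $\downarrow\! A$ of $x$; antisymmetry is automatic because $a$ is declared strictly greater than the elements of $\downarrow\! A$ and incomparable with the rest; and the labeling is valid since $\ell(b) \le n < n+1 = \ell(a)$ whenever $b < a$.

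The inverse map $\Psi$ sends an offspring $y$ to the set $\Psi(y)$ of parents of $a$ in $y$, where $a$ is the element carrying label $n+1$. I would verify that $\Psi(y)$ is an antichain: if $b,c \in \Psi(y)$ with $b < c$, then $b < c < a$ contradicts the definition of parent for $b$. The substantive step is to check that $\Phi$ and $\Psi$ are mutually inverse. For $\Psi \circ \Phi = \mathrm{id}$, one must argue that the maximal elements of $\downarrow\! A$ in $x$ are exactly $A$, which uses the antichain property (elements of $A$ are pairwise incomparable and each is maximal in $\downarrow\! A$ since nothing above it in $x$ lies in $\downarrow\! A$). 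For $\Phi \circ \Psi = \mathrm{id}$, one must show that in any offspring $y$ the downset of $a$ in $x$ coincides with the ideal generated by the parents of $a$; this is the standard fact that in a finite poset every downset is the downward closure of its set of maximal elements, applied to $\{b \in x : b < a\}$.

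I expect the main conceptual point to be the recognition that specifying an offspring amounts to specifying which elements of $x$ lie below the new element, and that this downset is determined by, and freely determined by, its set of maximal elements — hence antichains, not arbitrary subsets, parametrize the offspring. The rest is routine verification of poset axioms and of the bijection identities above.
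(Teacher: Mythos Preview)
Your proposal is correct and follows essentially the same approach as the paper: both construct the bijection sending an antichain $A$ to the offspring in which $A$ is the parent set of the new maximal element (labeled $n+1$), with inverse sending an offspring to the parent set of that element. Your version is simply more detailed in verifying that $\Phi$ and $\Psi$ are mutually inverse, whereas the paper's proof asserts the bijection after checking each map is well-defined.
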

\begin{proof}
Let $x$ have cardinality $n$ and suppose $A=\brac{a_1,\ldots ,a_k}$ is an antichain in $x$. Let $b\notin x$ and form the labeled causet $y=x\shortuparrow b$ where $y$ inherits the labeling of $x$, $\ell (b)=n+1$ and the elements of $A$ are the parents of $b$. Then $y$ is a labeled offspring of $x$ and different antichains give different labeled offspring because they give different parents of $b$. Conversely, if $y=x\shortuparrow b$ is a labeled offspring of $x$ and $A=\brac{a_1,\ldots ,a_k}$ is the set of parents of $b$, then $A$ is an antichain because $a_i< a_j$ contradicts the fact that $a_i$ is a parent of $b$. Hence, we have a bijection between the antichains in $x$ and labeled offspring of $x$.
\end{proof}

We denote the cardinality of a set $A$ by $\ab{A}$.

\begin{cor}       
\label{cor22}
The number $r$ of labeled offspring of a labeled causet $x$ satisfies $\ab{x}+1\le r\le 2^{\ab{x}}$ and both bounds are achieved.
\end{cor}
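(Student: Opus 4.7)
My plan is to reduce everything to Theorem~\ref{thm21}, which identifies the labeled offspring of $x$ with the antichains in $x$. So it suffices to show that the number of antichains in a poset of cardinality $n=\ab{x}$ lies between $n+1$ and $2^n$, and that both extremes are realized.

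For the upper bound, every antichain is a subset of $x$, so the count is at most $2^{\ab{x}}$. For the lower bound, I would exhibit $\ab{x}+1$ obviously distinct antichains: the empty set (which is an antichain by the stated convention) together with the $\ab{x}$ singletons $\brac{a}$ for $a\in x$ (singletons being antichains is also stated in the text). These are pairwise distinct, so $r\ge\ab{x}+1$.

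To see the bounds are tight, I would give two extremal causets. For the upper bound, take $x$ to be a total antichain, i.e., the poset of cardinality $n$ with no nontrivial order relations; then every subset of $x$ is an antichain, so $r=2^n$. For the lower bound, take $x$ to be a chain, i.e., totally ordered; then every two-element subset consists of comparable elements, so the only antichains are $\emptyset$ and the singletons, giving exactly $n+1$.

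No real obstacle arises here: the corollary is essentially a counting remark layered on top of Theorem~\ref{thm21}. The only mildly delicate point is making sure the conventions ``$\emptyset$ is an antichain'' and ``singletons are antichains'' are used, which the paragraph preceding the corollary explicitly establishes.
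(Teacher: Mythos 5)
Your proposal is correct and follows exactly the route the paper intends: the corollary is stated without proof precisely because it reduces, via the bijection of Theorem~\ref{thm21}, to the elementary count that a poset on $n$ elements has between $n+1$ antichains (the empty set and the singletons, achieved by a chain) and $2^n$ antichains (achieved by a total antichain). Nothing further is needed.
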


\begin{cor}       
\label{cor23}
The number of offspring of a causet $x$, including multiplicity, is the number of distinct antichains in $x$.
\end{cor}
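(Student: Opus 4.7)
The plan is to deduce the statement from Theorem~\ref{thm21} by summing labeled offspring over isomorphism classes.

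First, I would observe that antichains in $x$ are determined purely by the order relation $<$, so the quantity ``number of distinct antichains in $x$'' is an invariant of the unlabeled causet $x$ and does not depend on any labeling.

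Next, I would fix an arbitrary labeling $\ell$ of $x$; such a labeling exists because every finite poset admits a linear extension. Theorem~\ref{thm21} then furnishes a bijection between antichains of $x$ and labeled offspring of $(x,\ell)$, sending an antichain $A$ to the labeled causet $x\shortuparrow b$ in which $A$ is precisely the parent set of the new element $b$ (labeled $\ab{x}+1$).

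Finally, I would invoke the convention implicit in the transition from Figure~1 to Figure~2: the multiplicity $m(x\to y)$ of an unlabeled offspring $y$ counts those labeled offspring of $(x,\ell)$ whose underlying unlabeled causet is isomorphic to $y$. Partitioning the labeled offspring of $(x,\ell)$ by isomorphism type then gives
\[
\sum _y m(x\to y)=\ab{\brac{\text{labeled offspring of }(x,\ell)}}=\ab{\brac{\text{antichains in }x}},
\]
and the left-hand side is, by definition, the total number of offspring of $x$ counted with multiplicity.

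The only point requiring care is to note that $m(x\to y)$ is well defined independently of the chosen labeling $\ell$; this is immediate from the display, whose right-hand side is labeling-free. Beyond this small piece of bookkeeping, I expect no genuine obstacle, since the corollary is essentially a repackaging of Theorem~\ref{thm21} once offspring counted with multiplicity are identified with labeled offspring of a fixed labeling of $x$.
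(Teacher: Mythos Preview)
Your proposal is correct and follows exactly the route the paper intends: the paper states Corollary~\ref{cor23} without proof, as an immediate consequence of Theorem~\ref{thm21}, and your argument simply spells out the obvious passage from labeled to unlabeled offspring via partitioning by isomorphism type. The only minor comment is that your remark on the labeling-independence of each individual $m(x\to y)$ is not actually needed here, since the corollary concerns only the total $\sum_y m(x\to y)$, which your display already shows to be labeling-free.
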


\begin{cor}       
\label{cor24}
The number $r$ of offspring of a causet $x$, including multiplicity, satisfies $\ab{x}+1\le r\le 2^{\ab{x}}$ and both bounds are achieved.
\end{cor}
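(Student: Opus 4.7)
The plan is to reduce immediately to an antichain-counting problem via Corollary~\ref{cor23} and then establish the two inequalities and their sharpness by exhibiting explicit extremal causets.

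By Corollary~\ref{cor23}, $r$ equals the number of distinct antichains in $x$. For the upper bound, I observe that every antichain is in particular a subset of $x$, so the number of antichains is bounded above by the total number of subsets, namely $2^{\ab{x}}$. For the lower bound, I would exhibit $\ab{x}+1$ manifestly distinct antichains: the empty set (which is an antichain by convention) together with each of the $\ab{x}$ singleton subsets $\brac{a}$ for $a\in x$ (which are antichains, again by convention). Since these are pairwise distinct, the count is at least $\ab{x}+1$.

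It remains to show that both bounds are actually attained. For the upper bound, take $x$ to be the ``totally incomparable'' causet: an $\ab{x}$-element causet in which no two elements are related. Then every subset of $x$ is an antichain, so the number of antichains is exactly $2^{\ab{x}}$. For the lower bound, take $x$ to be a chain, i.e. a totally ordered causet on $\ab{x}$ elements. Then any two distinct elements are comparable, so no subset of cardinality $\ge 2$ is an antichain; the only antichains are $\emptyset$ and the $\ab{x}$ singletons, giving exactly $\ab{x}+1$.

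There is essentially no obstacle here: the content is already packaged in Corollary~\ref{cor23}, and the same reasoning used for Corollary~\ref{cor22} applies verbatim once one notes that the multiplicity-weighted count of offspring coincides with the antichain count (whereas the labeled count also coincided with it by Theorem~\ref{thm21}). The only mild point to be careful about is that the two extremal examples (a total antichain and a chain) are unlabeled causets, so the counts given are genuinely the offspring counts with multiplicity, not just labeled-offspring counts.
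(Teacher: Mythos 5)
Your proof is correct and follows exactly the route the paper intends: the paper states Corollary~\ref{cor24} without proof as an immediate consequence of Corollary~\ref{cor23}, and your antichain count (empty set plus singletons for the lower bound, all subsets for the upper bound, with the total antichain and the chain as the extremal causets) is the standard argument that also underlies Corollary~\ref{cor22}. Nothing further is needed.
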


For example, the causet $x$ in Figure~2 has eight distinct antichains so $x$ has eight offspring, including multiplicity. The causet $u$ in Figure~2 has ten distinct antichains so $u$ has ten offspring, including multiplicity.

Let $c=(c_0,c_1,\ldots )$ be a sequence of nonnegative numbers (called \textit{coupling constants}) \cite{sor94, vr06}. For
$r,s\in\positive$ with $r\le s$ define
\begin{equation*}
\lambda _c(s,r)=\sum _{k=r}^s\binom{s-r}{k-r}c_k=\sum _{k=0}^{s-r}\binom{s-r}{k}c_{r+k}
\end{equation*}
For $x\in\pscript _n$, $y\in\pscript _{n+1}$ with $y=x\shortuparrow a$ define the transition probability \cite{sor94, vr06}
\begin{equation*}
p_c(x\to y)=m(x\to y)\,\frac{\lambda _c(\alpha ,\pi )}{\lambda _c(n,0)}
\end{equation*}
where $\alpha$ is the number of ancestors and $\pi$ the number of parents of $a$. By convention we define $p_c(x\to y)=0$ if
$x\not\to y$. It is shown in \cite{vr06} that $p_c(x\to y)$ is a probability distribution in that it satisfies the Markov-sum rule
$\sum p_c(x\to y)=1$. The distribution $p_c(x\to y)$ is essentially the most general that is consistent with principles of causality and covariance \cite{rs00, vr06}. It is hoped that other theoretical principles or experimental data will determine the coupling constants One suggestion is to take $c_k=1/k!$ \cite{sor03}. Another is the \textit{percolation dynamics} $c_k=c^k$ for some $c>0$ \cite{sur11}. For this later choice we have the simple form
\begin{equation*}
p_c(x\to y)=m(x\to y)r^\pi(1-r)^\beta
\end{equation*}
where $\beta$ is the number of incomparable elements to $a$ (other than $a$ itself) and $r=c(1+c)^{-1}$ \cite{gud112}.

We view a causet $x\in\pscript _n$ as a possible universe at step $n$ while a path may be viewed as a possible (evolved) universe. The set $\pscript$ together with the set of transition probabilities $p_c(x\to y)$ forms a \textit{classical sequential growth process} (CSGP) which we denote by $(\pscript ,p_c)$ \cite{gud111, rs00, vr06}. It is clear that $(\pscript ,p_c)$ is a Markov chain. As with any Markov chain, the probability of an $n$-path $\omega =\omega _1\omega _2\cdots\omega _n$ is
\begin{equation*}
p_c^n(\omega )=p_c(\omega _1\to\omega _2)p_c(\omega _2\to\omega _3)\cdots p_c(\omega _{n-1}\to\omega _n)
\end{equation*}
Of course, $\omega\mapsto p_c^n(\omega )$ is a probability measure on $\Omega _n$. Figure~3 illustrates the first two steps of a CSGP. It follows from Corollary~\ref{cor23} that the number of offspring including multiplicity of $x_4$, $x_5$, $x_6$, $x_7$ and $x_8$ are $4,5,6,5,8$, respectively. In this case, Corollary~\ref{cor24} tells us that $4\le r\le 8$.
\vskip 2pc

\setlength{\unitlength}{8pt}
\begin{picture}(45,26)
\put(22,8){\circle{7}}  
\put(22,8){\circle*{.35}}
\put(17.5,7){\makebox{$x_1$}}
\put(19.5,8){\vector(-1,1){4}} 
\put(24.5,8){\vector(1,1){4}} 
\put(14,14){\circle{7}} 
\put(12,15.5){\vector(-1,1){5.5}} 
\put(16,15,5){\vector(1,1){5}} 
\put(14,15){\circle*{.35}}
\put(14,13){\circle*{.35}}
\put(14,13.2){\line(0,1){1.5}}
\put(9.75,13.75){\makebox{$x_2$}}
\put(14,16.5){\vector(0,1){4}} 
\put(30,14){\circle{7}} 
\put(28,15.5){\vector(-1,1){5}} 
\put(32,15.5){\vector(1,1){5}} 
\put(25,17){\makebox{$\scriptstyle 2$}}
\put(29,14){\circle*{.35}}
\put(31,14){\circle*{.35}}
\put(33,13.5){\makebox{$x_3$}}
\put(30,16.5){\vector(0,1){4}} 
\put(5,23){\circle{7}} 
\put(5,21.5){\circle*{.35}}
\put(5,23){\circle*{.35}}
\put(5,24.5){\circle*{.35}}
\put(5,21.5){\line(0,1){3}}
\put(.75,23){\makebox{$x_4$}}
\put(14,23){\circle{7}} 
\put(14,22){\circle*{.35}}
\put(13.25,23.5){\circle*{.35}}
\put(14.75,23.5){\circle*{.35}}
\put(14,22){\line(1,2){.8}}
\put(14,22){\line(-1,2){.8}}
\put(10,23){\makebox{$x_5$}}
\put(22,23){\circle{7}} 
\put(21,22){\circle*{.35}}
\put(22.5,22){\circle*{.35}}
\put(21,23.5){\circle*{.35}}
\put(21,22){\line(0,1){1.5}}
\put(18,23){\makebox{$x_6$}}
\put(30,23){\circle{7}} 
\put(29,22){\circle*{.35}}
\put(31,22){\circle*{.35}}
\put(30,23.75){\circle*{.35}}
\put(29,22){\line(1,2){.8}}
\put(31,22){\line(-1,2){.8}}
\put(26,23){\makebox{$x_7$}}
\put(38,23){\circle{7}} 
\put(36.5,23){\circle*{.35}}
\put(38,23){\circle*{.35}}
\put(39.5,23){\circle*{.35}}
\put(41,23){\makebox{$x_8$}}
\centerline{\textbf{Figure 3}}
\end{picture}
\bigskip

The set of all paths beginning with $\omega =\omega _1\omega _2\cdots\omega _n\in\Omega _n$ is called an
\textit{elementary cylinder set} and is denoted by $\rmcyl (\omega )$. If $A\subseteq\Omega _n$, then the
\textit{cylinder set} $\rmcyl (A)$ is defined by
\begin{equation*}
\rmcyl (A)=\bigcup _{\omega\in A}\rmcyl (\omega )
\end{equation*}
Using the notation
\begin{equation*}
\cscript (\Omega _n)=\brac{\rmcyl (A)\colon A\subseteq\Omega _n}
\end{equation*}
we see that
\begin{equation*}
\cscript (\Omega _1)\subseteq\cscript (\Omega _2)\subseteq\cdots
\end{equation*}
is an increasing sequence of subalgebras of the \textit{cylinder algebra} $\cscript (\Omega )=\cup\cscript (\Omega _n)$. For
$A\subseteq\Omega$ we define the set $A^n\subseteq\Omega _n$ by
\begin{equation*}
A^n=\brac{\omega _1\omega _2\cdots\omega _n\in\Omega _n
   \colon\omega _1\omega _2\cdots\omega _n\omega _{n+1}\cdots\in A}
\end{equation*}
That is, $A^n$ is the set of $n$-paths that can be continued to a path in $A$. We think of $A^n$ as the $n$-step approximation to $A$.

For $A=\rmcyl (A_1)\in\cscript (\Omega _n)$, $A_1\subseteq\Omega _n$ define $\phat _c(A)=p_c^n(A)$. Notice that $\phat _c$ becomes a well-defined probability measure on the algebra $\cscript (\Omega )$. By the Kolmogorov extension theorem,
$\phat _c$ has a unique extension to a probability measure $\nu _c$ on the $\sigma$-algebra $\ascript$ generated by
$\cscript (\Omega )$. We conclude that $(\Omega ,\ascript ,\nu _c)$ is a probability space and the restriction
$\nu _c\mid\cscript (\Omega _n)=\phat _c$.

\section{Quantum Sequential Growth Processes} 
We now show how to ``quantize'' the CSGP $(\pscript ,p_c)$ to obtain a quantum sequential growth process (QSGP). Letting
$H=L_2(\Omega ,\ascript ,\nu _c)$ be the \textit{path Hilbert space} we see that the $n$-\textit{path Hilbert spaces}
$H_n=L_2(\Omega ,\cscript (\Omega _n),\phat _c)$ form an increasing sequence $H_1\subseteq H_2\subseteq\cdots$ of closed subspaces of $H$. Let
\begin{equation*}
\Omega '_n=\brac{\omega\in\Omega _n\colon p_c^n(\omega )\ne 0}
\end{equation*}
be the set of $n$-paths with nonzero measure. For $\omega\in\Omega '_n$, letting
\begin{equation*}
e_\omega ^n=\chi _{\rmcyl (\omega )}/p_c^n(\omega )^{1/2}
\end{equation*}
it is clear that $\brac{e_\omega ^n\colon\omega\in\Omega '_n}$ forms an orthonormal basis for $H_n$.

Letting $1=\chi _\Omega$ we see that 1 is a unit vector in $H$. More generally, if $A\in\cscript (\Omega )$ then the characteristic function $\chi _s\in H$ with $\doubleab{\chi _s}=\nu _c(A)^{1/2}$. If $T$ is an operator on $H_n$, we shall assume that $T$ is also an operator on $H$ by defining $Tf=0$ for all $f\in H_n^\perp$. A \textit{probability operator} on $H_n$ is a positive operator $\rho$ on $H_n$ that satisfies the normalization condition $\elbows{\rho 1,1}=1$. If $\rho$ is a probability operator on $H_n$, we define the \textit{decoherence functional}
$D_\rho\colon\cscript (\Omega _n)\times\cscript (\Omega _n)\to\complex$ by
\begin{equation*}
D_\rho (A,B)=\rmtr\paren{\rho\ket{\chi _B}\bra{\chi _A}}=\elbows{\rho\chi _B,\chi _A}
\end{equation*}
It is easy to check that $D_\rho$ has the usual properties of a decoherence functional. Namely, $D_\rho (\Omega ,\Omega )=1$, $D_\rho (A,B)=\overline{D_\rho (B,A)}$, $A\mapsto D_\rho (A,B)$ is a complex measure on $\cscript (\Omega _n)$ and if
$A_i\in\cscript (\Omega _n)$, $i=1,\ldots ,r$, then the $r\times r$ matrix with components $D_\rho (A_i,A_j)$ is positive semidefinite. We interpret $D_\rho (A,B)$ as a measure of the interference between the events $A$ and $B$ when the system is described by $\rho$. We also define the $q$-\textit{measure} $\mu _\rho\colon\cscript (\Omega _n)\to\real ^+$ by
$\mu _\rho (A)=D_\rho (A,A)$ and interpret $\mu _\rho (A)$ as the quantum propensity of the event $A$. In general, $\mu _\rho$ is not additive so $\mu _\rho$ is not a measure on $\cscript (\Omega _n)$. However, $\mu _\rho$ is \textit{grade}-2
\textit{additive} \cite{djs10, sor94, sor03} in the sense that if $A,B,C\in\cscript (\Omega _n)$ are mutually disjoint, then
\begin{align}         
\label{eq31}
\mu _\rho&(A\cup B\cup C)\notag\\
   &=\mu _\rho (A\cup B)+\mu _\rho (A\cup C)+\mu _\rho (B\cup C)-\mu _\rho (A)-\mu _\rho (B)-\mu _\rho (C)
\end{align}

A subset $\qscript\subseteq\ascript$ is a \textit{quadratic algebra} if $\emptyset,\Omega\in\qscript$ and if $A,B,C\in\qscript$ are mutually disjoint with $A\cup B,A\cup C,B\cup C\in\qscript$, then $A\cup B\cup C\in\qscript$. A $q$-\textit{measure} on a quadratic algebra $\qscript$ is a map $\mu\colon\qscript\to\real ^+$ satisfying \eqref{eq31} whenever, $A,B,C\in\qscript$ are mutually disjoint with $A\cup B,A\cup C,B\cup C\in\qscript$. In particular $\cscript (\Omega _n)$ is a quadratic algebra and
$\mu _\rho\colon\cscript (\Omega _n)\to\real ^+$ is a $q$-measure in this sense.

Let $\rho _n$ be a probability operator on $H_n$, $n=1,2,\ldots$, which we view as a probability operator on $H$. We say that the sequence $\rho _n$ is \textit{consistent} if
\begin{equation*}
D_{\rho _{n+1}}(A,B)=D_{\rho _n}(A,B)
\end{equation*}
for every $A,B\in\cscript (\Omega _n)$. We call a consistent sequence $\rho _n$ a \textit{discrete quantum process} and we call
$\paren{H,\brac{\rho _n}}$ a \textit{quantum sequential growth process} (QSGP).

Let $\paren{H,\brac{\rho _n}}$ be a QSGP. If $C\in\cscript (\Omega )$ has the form $C=\rmcyl (A)$, $A\in\cscript (\Omega _n)$, we define $\mu (C)=\mu _{\rho _n}(A)$. It is easy to check that $\mu$ is well-defined and gives a $q$-measure on
$\cscript (\Omega )$. In general, $\mu$ cannot be extended to a $q$-measure on $\ascript$, but it is important to extend $\mu$ to other physically relevant sets \cite{djs10, sor11}. We say that a set $A\in\ascript$ is \textit{suitable} if
$\lim\rmtr\paren{\rho _n\ket{\chi _A}\bra{\chi _A}}$ exists and is finite and if this is the case, we define $\mutilde (A)$ to be the limit. We denote the collection of suitable sets by $\sscript (\Omega )$ and it is shown in \cite{gud112} that $\sscript (\Omega )$ is a quadratic algebra with $\mutilde$ a $q$-measure on $\sscript (\Omega )$ that extends $\mu$ from $\cscript (\Omega )$.

In general, $\sscript (\Omega )$ is strictly between $\cscript (\Omega )$ and $\ascript$. For example, if $A\in\ascript$ with
$\nu _c(A)=0$, then $\chi _A=0$ almost everywhere so $\ket{\chi _A}\bra{\chi _A}=0$ and $\mutilde (A)=0$. To be specific, if
$\omega\in\Omega$ then $\brac{\omega}\in\ascript$ but $\brac{\omega}\notin\cscript (\Omega )$. Although there are exceptions, a typical $\omega\in\Omega$ satisfies $\nu _c\paren{\brac{\omega}}=0$ so $\mutilde\paren{\brac{\omega}}=0$. It follows from Schwarz's inequality that if $A\in\sscript (\Omega )$ then $\mutilde (A)\le\nu _c(A)\sup\doubleab{\rho _n}$.

\begin{thm}       
\label{thm31}
If $\paren{H,\brac{\rho _n}}$ is a QSGP and $A\in\ascript$, then $A\in\sscript (\Omega )$ if and only if
$\lim\mu _{\rho _n}\sqbrac{\rmcyl (A^n)}$ exists. If this is the case, then
$\mutilde (A)\!=\!\lim\mu _{\rho _n}\!\sqbrac{\rmcyl (A^n)}$.
\end{thm}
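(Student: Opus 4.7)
The plan is to show that the two quantities $\mu_{\rho_n}\sqbrac{\rmcyl(A^n)}$ and $\rmtr(\rho_n\ket{\chi_A}\bra{\chi_A})=\elbows{\rho_n\chi_A,\chi_A}$ are asymptotically equal in $n$; this single estimate gives both directions of the biconditional together with the identification of the limits.

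First I would set up the algebraic decomposition. The definition of $A^n$ as the set of $n$-prefixes of paths in $A$ gives $A\subseteq\rmcyl(A^n)$, and the same reasoning applied to $(n+1)$-prefixes yields $\rmcyl(A^{n+1})\subseteq\rmcyl(A^n)$. Hence the sets $E_n:=\rmcyl(A^n)\setminus A\in\ascript$ form a decreasing sequence and $\chi_{\rmcyl(A^n)}=\chi_A+\chi_{E_n}$ in $H$. Expanding the self-adjoint sesquilinear form $(f,g)\mapsto\elbows{\rho_n f,g}$ yields
\[
\mu_{\rho_n}\sqbrac{\rmcyl(A^n)} = \elbows{\rho_n\chi_A,\chi_A} + 2\Re\elbows{\rho_n\chi_A,\chi_{E_n}} + \elbows{\rho_n\chi_{E_n},\chi_{E_n}}.
\]
Positivity of $\rho_n$ makes $\elbows{\rho_n\cdot,\cdot}$ a positive semidefinite form, so the Cauchy--Schwarz inequality bounds the cross term by the geometric mean of the two diagonal ones, and the triangle inequality for the associated pseudo-seminorm $f\mapsto\elbows{\rho_n f,f}^{1/2}$ yields
\[
\ab{\elbows{\rho_n\chi_A,\chi_A}^{1/2}-\mu_{\rho_n}\sqbrac{\rmcyl(A^n)}^{1/2}}\le\elbows{\rho_n\chi_{E_n},\chi_{E_n}}^{1/2}.
\]

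The theorem thus reduces to establishing the residual estimate $\lim_n\elbows{\rho_n\chi_{E_n},\chi_{E_n}}=0$, and this is the main obstacle. The route I would follow is to handle first the case where $A$ is already a descending intersection $\bigcap_k\rmcyl(C_k)$ of cylinder sets with $C_k\in\cscript(\Omega_k)$: then for each $n$, the set $E_n$ itself lies in $\cscript(\Omega_n)$, so $\elbows{\rho_n\chi_{E_n},\chi_{E_n}}=\mu_{\rho_n}(E_n)$ is a genuine $q$-measure value on the cylinder algebra. Continuity of $\nu_c$ from above gives $\nu_c(\rmcyl(A^n))\downarrow\nu_c(A)$, hence $\nu_c(E_n)\downarrow 0$, and the bound $\mutilde(E_n)\le\nu_c(E_n)\sup\doubleab{\rho_n}$ recorded just before the theorem, together with the consistency identity $\mu_{\rho_m}(E_n)=\mu_{\rho_n}(E_n)$ for $m\ge n$, then forces the residual to zero. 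A general $A\in\ascript$ would be reduced to this case by outer approximation of $A$ by a descending sequence of cylinder sets up to a $\nu_c$-null set, which is available because $\cscript(\Omega)$ generates $\ascript$ and $\chi_A$ is unchanged in $H$ by modification on null sets.

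Finally, once the vanishing $\elbows{\rho_n\chi_{E_n},\chi_{E_n}}\to 0$ is in hand, the two-sided estimate of the first step shows that the sequences $\elbows{\rho_n\chi_A,\chi_A}$ and $\mu_{\rho_n}\sqbrac{\rmcyl(A^n)}$ converge or diverge together and share any limit they have; this simultaneously delivers the biconditional $A\in\sscript(\Omega)\iff\lim_n\mu_{\rho_n}\sqbrac{\rmcyl(A^n)}$ exists and the identity $\mutilde(A)=\lim_n\mu_{\rho_n}\sqbrac{\rmcyl(A^n)}$.
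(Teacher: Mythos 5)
Your overall strategy is genuinely different from the paper's, and the difference matters. The paper does not argue asymptotically at all: it introduces the orthogonal projection $P_n$ of $H$ onto $H_n$, claims the identity $P_n\chi _A=\chi _{\rmcyl (A^n)}$, and then uses $\rho _n=P_n\rho _nP_n$ to get the \emph{exact} equality $\rmtr\paren{\rho _n\ket{\chi _A}\bra{\chi _A}}=\mu _{\rho _n}\sqbrac{\rmcyl (A^n)}$ for every single $n$, from which the theorem is immediate. In your notation that identity says $P_n\chi _{E_n}=0$, i.e.\ the cross term and the residual $\elbows{\rho _n\chi _{E_n},\chi _{E_n}}$ vanish identically, not merely in the limit. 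So the one step you defer --- the ``residual estimate'' --- is the entire content of the theorem, and your plan for it does not go through.

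Concretely: (i) even when $A=\bigcap _k\rmcyl (C_k)$ is a descending intersection of cylinder sets, $E_n=\rmcyl (A^n)\setminus A$ is \emph{not} in $\cscript (\Omega _n)$ (the set $A$ you subtract is not a cylinder set), so $\elbows{\rho _n\chi _{E_n},\chi _{E_n}}$ is not a $q$-measure value on the cylinder algebra and the consistency condition does not apply to it; (ii) the bound $\mutilde (E_n)\le\nu _c(E_n)\sup\doubleab{\rho _n}$ is vacuous unless $\sup _n\doubleab{\rho _n}<\infty$, which is not part of the definition of a QSGP; (iii) the reduction of a general $A\in\ascript$ ``up to a $\nu _c$-null set'' is illegitimate, because $A\mapsto A^n$ is a set-theoretic, not a measure-theoretic, operation: changing $A$ on a null set can change $\rmcyl (A^n)$ completely. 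Point (iii) is not a technicality; it shows your residual genuinely need not vanish. Take $A$ to be a countable set of paths containing, for every finite path $\sigma$, some infinite extension of $\sigma$, and suppose (as is typical) that singleton paths are $\nu _c$-null. Then $A^n=\Omega _n$ for all $n$, so $E_n=\Omega\setminus A$ and $\chi _{E_n}=1$ in $H$, whence $\elbows{\rho _n\chi _{E_n},\chi _{E_n}}=\elbows{\rho _n1,1}=1$ for every $n$, while $\elbows{\rho _n\chi _A,\chi _A}=0$ and $\mu _{\rho _n}\sqbrac{\rmcyl (A^n)}=1$. Your two sequences then differ by $1$ forever. (The same example puts pressure on the theorem as literally stated and on the paper's identity $P_n\chi _A=\chi _{\rmcyl (A^n)}$, which really requires $\nu _c\paren{A\cap\rmcyl (\omega )}$ to equal $0$ or $\nu _c\paren{\rmcyl (\omega )}$ for each $\omega\in\Omega _n$; in general $P_n\chi _A$ is the conditional expectation of $\chi _A$ on $\cscript (\Omega _n)$.) Your Cauchy--Schwarz bookkeeping in the first step is fine; to salvage the argument you would need either to prove and invoke the projection identity directly, or to add hypotheses under which $\nu _c\paren{\rmcyl (A^n)\setminus A}\to 0$ and $\doubleab{\rho _n}$ is uniformly bounded.
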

\begin{proof}
Let $P_n$ be the orthogonal projection from $H$ onto $H_n$. We first show that $P_n\chi _A=\chi _{\rmcyl (A^n)}$. Now for $e_\omega ^n$, $\omega\in\Omega '_n$ with $\omega =\omega _1\omega _2\cdots\omega _n$ we have
\begin{align*}
\elbows{\chi _{\rmcyl (A^n)},e_\omega ^n}
  &=\frac{1}{p_c^n(\omega )^{1/2}}\elbows{\chi _{\rmcyl (A^n)},\chi _{\rmcyl (\omega )}}\\\noalign{\medskip}
  &=\begin{cases}{p_c^n(\omega )^{1/2}}&{\hbox{if  }\omega _1\omega _2\cdots\omega _n\omega _{n+1}\cdots\in A}\\
  {0}&{\hbox{otherwise}}\end{cases}\\\noalign{\medskip}
  &=\frac{1}{p_c^n(\omega )^{1/2}}\elbows{\chi _a,\chi _{\rmcyl (\omega )}}\\\noalign{\medskip}
  &=\elbows{\chi _A,e _\omega ^n}=\elbows{P_n\chi _A,e_\omega ^n}
\end{align*}
Since $P_n\chi _A,\chi _{\rmcyl (A^n)}\in H_n$ is follows that $P_n\chi _A=\chi _{\rmcyl (A^n))}$. Hence,
\begin{align*}
\rmtr\paren{\rho _n\ket{\chi _A}\bra{\chi _A}}&=\rmtr\paren{\rho _nP_n\ket{\chi _A}\bra{\chi _A}}\\
  &=\sum _{\omega\in\Omega '_n}\elbows{\rho _nP_n\ket{\chi _A}\bra{\chi _A}e_\omega ^n,e_\omega ^n}\\
  &=\sum _{\omega\in\Omega '_n}\elbows{e_\omega ^n,\chi _A}\elbows{\rho _nP_n\chi _{A^i}e_\omega ^n}\\
  &=\sum _{\omega\in\Omega '_n}\elbows{e_\omega ^n,\chi _{\rmcyl (A^n)}}\elbows{\rho _n\chi _{\rmcyl (A^n)},e_\omega ^n}\\
  &=\rmtr\paren{\rho _n\ket{\chi _{\rmcyl (A^n)}}\bra{\chi _{\rmcyl (A^n)}}}=\mu _{\rho _n}(A^n)
\end{align*}
The result now follows.
\end{proof}

\section{Amplitude Processes} 
Various ways of constructing discrete quantum processes on $H\!=\!L_2(\Omega ,\ascript ,\nu _c)$ have been considered
\cite{gud111, gud112, gud121}. After we introduce general amplitude processes, we present a concrete realization of a discrete quantum process in terms of a natural quantum action.

A \textit{transition amplitude} is a map $a\colon\pscript _n\times\pscript _{n+1}\to\complex$ such that $a(x,y)=a(x\to y)=0$ if
$p_c(x\to y)=0$ and
\begin{equation}         
\label{eq41}
\sum\brac{a(x\to y)\colon y\in\pscript _{n+1},x\to y}=1
\end{equation}
for all $x\in\pscript _n$. The \textit{amplitude process} (AP) corresponding to $a$ is given by the maps
$a_n\colon\Omega _n\to\complex$ where
\begin{equation*}
a_n(\omega _1\omega _2\cdots\omega _n)
  =a(\omega _1\to\omega _2)a(\omega _2\to\omega _3)\cdots a(\omega _{n-1}\to\omega _n)
\end{equation*}
We define the \textit{probability vector} $\ahat _n\colon\Omega _n\to\complex$ by $\ahat _n(\omega )=0$ if $p_c^n(\omega )=0$ and if $\omega\in\Omega '_n$ then $\ahat _n(\omega )=p_c^n(\omega )^{-1}a_n(\omega )$. For a given AP $a_n$ define the positive operators $\rho _n$ on $H_n$ by
\begin{equation*}
\elbows{\rho _n\chi _{\brac{\omega '}},\chi _{\brac{\omega}}}=a_n(\omega )\overline{a_n(\omega ')}
\end{equation*}
for all $\omega ,\omega '\in\Omega '_n$. Then
\begin{equation*}
\elbows{\rho _ne _{\omega '}^n,e_\omega ^n}
  =p_c^n(\omega ')^{-1/2}p_c^n(\omega )^{-1/2}a_n(\omega )\overline{a_n(\omega ')}
\end{equation*}
It follows that $\rho _n$ is the rank 1 operator given by $\rho _n=\ket{\ahat _n}\bra{\ahat _n}$.

\begin{thm}       
\label{thm41}
The operators $\rho _n$, $n=1,2,\ldots$, form a discrete quantum process.
\end{thm}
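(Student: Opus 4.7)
The plan is to verify the three ingredients that make $\brac{\rho_n}$ a discrete quantum process: each $\rho_n=\ket{\ahat_n}\bra{\ahat_n}$ is positive, it satisfies the normalization $\elbows{\rho_n 1,1}=1$, and the family is consistent in the sense that $D_{\rho_{n+1}}(A,B)=D_{\rho_n}(A,B)$ for all $A,B\in\cscript(\Omega_n)$. Positivity is immediate from the rank one form $\ket{\psi}\bra{\psi}$.

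For normalization I would first observe that $\ahat_n\in H_n$ is constant on each elementary cylinder $\rmcyl(\omega)$, $\omega\in\Omega'_n$, with value $p_c^n(\omega)^{-1}a_n(\omega)$, and zero on the measure-zero complement. Therefore
\[
\elbows{\ahat_n,1}=\sum_{\omega\in\Omega'_n}\ahat_n(\omega)\,p_c^n(\omega)=\sum_{\omega\in\Omega_n}a_n(\omega),
\]
where extending the sum to all of $\Omega_n$ costs nothing because $a(x\to y)=0$ whenever $p_c(x\to y)=0$. Writing $a_n(\omega)$ as a product of single-step amplitudes and performing the sum on $\omega_n$ first, \eqref{eq41} collapses the expression to $\sum_{\Omega_{n-1}}a_{n-1}(\omega)$; iterating the telescoping reduces the total sum to a sum over $\pscript_1$, which has a single element and contributes $1$. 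Hence $\elbows{\rho_n 1,1}=\ab{\elbows{\ahat_n,1}}^2=1$.

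For consistency I would exploit the rank one structure of $\rho_n$: the decoherence functional factors as $D_{\rho_n}(A,B)=\elbows{\ahat_n,\chi_A}\,\overline{\elbows{\ahat_n,\chi_B}}$ (where the conjugation is irrelevant for characteristic functions anyway), so it suffices to show that $\elbows{\ahat_{n+1},\chi_A}=\elbows{\ahat_n,\chi_A}$ for every $A=\rmcyl(A_1)\in\cscript(\Omega_n)$. The same bookkeeping as above gives $\elbows{\ahat_n,\chi_A}=\sum_{\omega\in A_1}a_n(\omega)$, while $\elbows{\ahat_{n+1},\chi_A}$ is a sum of $a_{n+1}(\omega_1\cdots\omega_{n+1})$ over all $(n+1)$-paths whose $n$-truncation lies in $A_1$; factoring $a_{n+1}(\omega_1\cdots\omega_{n+1})=a_n(\omega_1\cdots\omega_n)\,a(\omega_n\to\omega_{n+1})$ and summing on $\omega_{n+1}$ first invokes \eqref{eq41} one more time to recover $\sum_{\omega\in A_1}a_n(\omega)$.

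There is no genuine obstacle in the argument: both the normalization and the consistency claims devolve into the single telescoping identity that \eqref{eq41} was designed to enable. The only point requiring mild care is distinguishing $\Omega_n$ from $\Omega'_n$, which is purely cosmetic under the convention that $a$ vanishes on zero-probability transitions.
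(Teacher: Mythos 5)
Your proof is correct and follows essentially the same route as the paper: positivity from the rank-one form $\rho_n=\ket{\ahat_n}\bra{\ahat_n}$, with both normalization and consistency reduced to the telescoping identity $\sum_\omega a_n(\omega)=1$ supplied by \eqref{eq41}. The only cosmetic difference is that you factor the decoherence functional through the linear functional $\elbows{\ahat_n,\cdot\,}$ before applying \eqref{eq41}, whereas the paper verifies consistency directly on matrix elements over elementary cylinder sets and then sums over $A$ and $B$.
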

\begin{proof}
We have seen that $\rho _n$ is a positive operator on $H_n$, $n=1,2,\ldots\,$. To show that $\rho _n$ is a probability operator we have
\begin{align}         
\label{eq42}
\elbows{e_n1,1}&=\elbows{\rho _n\sum\chi _{\brac{\omega}},\sum\chi _{\brac{\omega}}}
  =\sum _{\omega ,\omega '}\elbows{e_n\chi _{\brac{\omega}},\chi _{\brac{\omega '}}}\notag\\
  &=\sum _{\omega ,\omega '}a_n(\omega)\overline{a_n(\omega )}=\ab{\sum _\omega a_n(\omega )}^2
\end{align}
Applying \eqref{eq41} we obtain
\begin{align}         
\label{eq43}
\sum a_n(\omega )
  &=\sum a(\omega _1\to\omega _2)a(\omega _2\to\omega _3)\cdots a(\omega _{n-1}\to\omega _n)\notag\\
  &=\sum a(\omega _1\to\omega _2)\cdots a(\omega _{n-2}\to\omega _{n-1})
  \sum _{\omega _n}a(\omega _{n-1}\to\omega _n)\notag\\
  &=\sum a(\omega _1\to\omega _2)\cdots a(\omega _{n-2}\to\omega _{n-1})\notag\\
  &\quad\vdots\notag\\
  &=\sum _{\omega _2}a(\omega _1\to\omega _2)=1
\end{align}
By \eqref{eq42} and \eqref{eq43} we conclude that $\elbows{\rho _n1,1}=1$ so $P_n$ is a probability operator. To show that
$\rho _n$ is a consistent sequence, let $\omega ,\omega '\in\Omega '_n$ with
$\omega =\omega _1\omega _2\cdots\omega _n$, $\omega '=\omega '_1\omega '_2\cdots\omega '_n$. By \eqref{eq41} we have
\begin{align}         
\label{eq44}
D_{n+1}&\sqbrac{\rmcyl (\omega ),\rmcyl (\omega ')}
  =\elbows{\rho _{n+1}\chi _{\rmcyl (\omega ')},\chi _{\rmcyl (\omega )}}\notag\\
  &=\sum\brac{a_n(\omega )a_n(\omega _n\to x)\overline{a(\omega ')}a_n(\omega '_n\to y)
  \colon\omega _n\to x,\omega '_n\to y}\notag\\
  &=a_n(\omega )\overline{a_n(\omega ')}\sum\brac{a(\omega _n\to x
  \colon\omega _n\to x}\sum\brac{\overline{a(\omega '_n\to y)}\colon\omega '_n\to y}\notag\\
  &=a_n(\omega )\overline{a_n(\omega ')}=D_n\sqbrac{\rmcyl (\omega ),\rmcyl (\omega ')}
\end{align}
For $A,B\in\cscript (\Omega _n)$ by \eqref{eq44} we have
\begin{align*}
D_{n+1}(A,B)&=\sum\brac{D_{n+1}\sqbrac{\rmcyl (\omega ),\rmcyl (\omega ')}\colon\omega\in A,\omega '\in B}\\
  &=\sum\brac{D_n\sqbrac{\rmcyl (\omega ),\rmcyl (\omega ')}\colon\omega\in A,\omega '\in B}\\
  &=D_n(A,B)\qedhere
\end{align*}
\end{proof}

Since $\rho _n=\ket{\ahat _n}\bra{\ahat _n}$, we see that
\begin{equation*}
\doubleab{\rho _n}=\doubleab{\ket{\ahat _n}\bra{\ahat _n}}=\doubleab{\ahat _n}^2=\sum\ab{a_n(\omega )}^2
\end{equation*}
The decoherence functional corresponding to $\rho _n$ becomes
\begin{align*}
D_n(A,B)&=\elbows{\rho _n\chi _B,\chi _A}=\elbows{\ket{\ahat _n}\bra{\ahat _n}\chi _B,\chi _A}\\
  &=\elbows{\ahat _n,\chi _A}\elbows{\chi _B,\ahat _n}\\
  &=\sum\brac{a_n(\omega )\colon\omega\in A\cap\Omega '_n}
  \sum\brac{\overline{a_n(\omega ')}\colon\omega '\in B\cap\Omega '_n}
\end{align*}
The corresponding $q$-measure is given by
\begin{equation*}
\mu _n(A)=D_n(A,A)=\ab{\sum\brac{a_n(\omega )\colon\omega\in A\cap\Omega '_n}}^2
\end{equation*}
It follows from Theorem~\ref{thm31} that if $A\in\ascript$ then $A\in\sscript (\Omega )$ if and only if
\begin{equation*}
\lim\mu _{\rho _n}\sqbrac{\rmcyl (A^n)}=\lim\ab{\sum\brac{a_n(\omega )\colon\omega\in A_n\cap\Omega '_n}}^2
\end{equation*}
exists and is finite in which case $\mutilde (A)$ is this limit.

We now present a specific example of an AP that arises from a natural quantum action. For $x\in\pscript$, the \textit{height}
$h(x)$ of $x$ is the cardinality of a largest chain is $x$. The \textit{width} $w(x)$ of $x$ is the cardinality of a largest antichain in $x$. Finally, the \textit{area} $A(x)$ of $x$ is given by $A(x)=h(x)w(x)$. Roughly speaking, $h(x)$ corresponds to an internal time in $x$, $w(x)$ corresponds to the mass or energy of $x$ \cite{gud121} and $A(x)$ corresponds to an action for $x$. If
$x\to y$, then $h(y)=h(x)$ or $h(x)+1$ and $w(y)=w(x)$ or $w(x)+1$. In the case of $h(y)=h(x)+1$ we call $y$ a
\textit{height offspring} of $x$, in the case $w(y)=w(x)+1$ we call $y$ a \textit{width offspring} of $x$ and if both $h(y)=h(x)$,
$w(y)=w(x)$ hold we call $y$ a \textit{mild offspring} of $x$. Let $H(x)$, $W(x)$ and $M(x)$ be the sets of height, width and mild offspring of $x$, respectively.

\begin{lem}       
\label{lem42}
The sets $H(x)$, $W(x)$, $M(x)$ form a partition of the set of offspring $x$.
\end{lem}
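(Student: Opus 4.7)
The plan is to exploit the fact that adding one maximal element $a$ to pass from $x$ to $y=x\shortuparrow a$ allows each of $h$ and $w$ to grow by at most $1$, as the excerpt already records. This gives four a priori cases for the pair $(h(y)-h(x),w(y)-w(x))\in\brac{0,1}^2$. Three of these cases are exactly the three sets $H(x)$, $W(x)$, $M(x)$, so the partition claim reduces to showing that the fourth case — simultaneous increase — is vacuous. Covering is then immediate, and pairwise disjointness follows from the fact that the three remaining cases are distinguished by the ordered pair of integers $\paren{h(y),w(y)}$.

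First I would record the two easy observations. Since every offspring $y$ of $x$ is obtained by adjoining a single element $a$, and since $h$ and $w$ can each change by at most $1$ under such an adjunction, every offspring falls into at least one of the four cases listed above. Moreover, the definitions of $H(x)$, $W(x)$, $M(x)$ are determined by the values of $h(y)-h(x)$ and $w(y)-w(x)$, so any two of these three sets are automatically disjoint once we rule out the simultaneous-increase case.

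The main step, and the only nontrivial one, is to show that one cannot have both $h(y)=h(x)+1$ and $w(y)=w(x)+1$. Assume for contradiction that both hold. Since $h(y)>h(x)$, a longest chain in $y$ must use $a$, so there is a chain $b_1<b_2<\cdots <b_{h(x)}<a$ in $y$ with all $b_i\in x$. Since $w(y)>w(x)$, a largest antichain in $y$ must contain $a$, so there is an antichain $\brac{a,c_1,\ldots ,c_{w(x)}}$ in $y$, and then $\brac{c_1,\ldots ,c_{w(x)}}$ is an antichain of cardinality $w(x)$ in $x$. Each $c_i$ is incomparable to $a$ in $y$, so in particular $c_i\neq b_j$ for any $j$, since every $b_j$ satisfies $b_j<a$.

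I now claim the set $\brac{b_{h(x)},c_1,\ldots ,c_{w(x)}}$ is an antichain in $x$ of size $w(x)+1$, contradicting the definition of $w(x)$. For each $i$, if $c_i<b_{h(x)}$ then $c_i<b_{h(x)}<a$ would force $c_i<a$, contradicting incomparability of $c_i$ and $a$; and if $b_{h(x)}<c_i$ then $b_1<\cdots<b_{h(x)}<c_i$ would be a chain in $x$ of length $h(x)+1$, contradicting the definition of $h(x)$. Hence $b_{h(x)}$ is incomparable to every $c_i$, and since the $c_i$ already form an antichain, the enlarged set is an antichain of size $w(x)+1$, which is the desired contradiction. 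The main obstacle is precisely identifying the right element ($b_{h(x)}$, the top of a longest chain strictly below $a$) to append to the antichain $\brac{c_1,\ldots ,c_{w(x)}}$; once that choice is made, the two incomparabilities follow directly from the extremality of $h(x)$ and the incomparability of $a$ with the $c_i$. Combining this with the covering and disjointness observations above yields the partition.
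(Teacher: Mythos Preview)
Your proof is correct and follows essentially the same route as the paper's. Both reduce the partition claim to showing $H(x)\cap W(x)=\emptyset$, and both derive a contradiction by taking the top element of a maximal chain in $x$ lying below $a$ and arguing it is incomparable to every member of a maximum antichain in $x$ consisting of elements incomparable to $a$; the only differences are notational (the paper uses $a_1,\ldots ,a_s$ for the chain and $b_1,\ldots ,b_r$ for the antichain and phrases the dichotomy as ``either a larger antichain or a longer chain'') and in the order of the case analysis.
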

\begin{proof}
Since $\brac{y\colon x\to y}=H(x)\cup W(x)\cup M(x)$, we only need to show that $H(x)$, $W(x)$, $M(x)$ are mutually disjoint. Clearly, $H(x)\cap M(x)=W(x)\cap M(x)=\emptyset$ so we must show that $H(x)\cap W(x)=\emptyset$. Suppose $y\in H(x)$ where $y=x\shortuparrow a$. If $y\in W(x)$ then $a$ is incomparable with every element of some largest antichain
$\brac{b_1,\ldots ,b_r}$ in $x$. Also, $a>a_s>a_{s-1}>\cdots>a_1$ where $\brac{a_1,\ldots ,a_s}$ is a largest chain in $x$. It follows that $b_i\ne a_j$ for every $i,j$. Now $a_s\not> b_i$ for some $i$ because otherwise $a>b_i$ which is a contradiction. Hence, $a_s<b_i$ for some $i$ because otherwise $\brac{b_1,\ldots ,b_r,a_s}$ would be a larger antichain in $x$. But then
$\brac{a_1,\ldots ,a_s,b_i}$ is a larger chain in $x$. But this is a contradiction.
\end{proof}

If $x\to y$ we have the following three possibilities: $y\in M(x)$ in which case $A(y)-A(x)=0$, $y\in H(x)$ in which case
\begin{equation*}
A(y)-A(x)=\sqbrac{h(x)+1}w(x)-h(x)w(x)=w(x)
\end{equation*}
$y\in W(x)$ in which case
\begin{equation*}
A(y)-A(x)=h(x)\sqbrac{w(x)+1}-h(x)w(x)=h(x)
\end{equation*}
We define the transition amplitude $a(x\to y)$ in terms of the ``action'' change from $x$ to $y$. We first define the
\textit{partition function}
\begin{equation*}
z(x)=\sum _y\brac{e^{2\pi i\sqbrac{A(y)-A(x)}/\ab{x}}\colon p_c^{\ab{x}}(x\to y)\ne 0}
\end{equation*}
For $x\to y$ define the transition amplitude $a(x\to y)$ to be $0$ if $p_c^{\ab{x}}(x\to y)=0$ and otherwise
\begin{equation*}
a(x\to y)=\tfrac{1}{z(x)}e^{2\pi i\sqbrac{A(y)-A(x)}/\ab{x}}
\end{equation*}
As before, we have three possibilities. If $y\in M(x)$ then $a(x\to y)=z(x)^{-1}$, if $y\in H(x)$ then
\begin{equation*}
a(x\to y)=\frac{e^{2\pi iw(x)/\ab{x}}}{z(x)}
\end{equation*}
if $y\in W(x)$ then
\begin{equation*}
a(x\to y)=\frac{e^{2\pi ih(x)/\ab{x}}}{z(x)}
\end{equation*}

Since the transition amplitudes $a(x\to y)$ satisfy \eqref{eq41} it follows from Theorem~\ref{thm41} that the corresponding $\rho _n$ form a discrete quantum process. For any $x\in\pscript$ there are only three possible values for $a(x\to y)$. This is roughly analogous to a 3-dimensional Markov chain. Does this indicate the emergence of 3-dimensional space?

\section{Discrete Einstein Equation} 
Let $Q_n=\bigcup _{i=1}^n\pscript _i$ and let $K_n$ be the Hilbert space $\complex ^{Q_n}$ with the standard inner product
\begin{equation*}
\elbows{f,g}=\sum _{x\in Q_n}\overline{f(x)}g(x)
\end{equation*}
Let $L_n=K_n\otimes K_n$ which we identify with $\complex ^{Q_n\times Q_n}$. Suppose $\paren{H,\brac{\rho _n}}$ is a QSGP with corresponding decoherence matrices
\begin{equation*}
D_n(\omega ,\omega ')=D_n\sqbrac{\rmcyl (\omega ),\rmcyl (\omega ')}\quad\omega ,\omega '\in\Omega _n
\end{equation*}
If $\omega =\omega _1\omega _2\cdots\omega _n\in\Omega _n$ and $\omega _i=x$ for some $i$, then $\omega$ \textit{contains} $x$. For $x,y\in Q_n$ we define
\begin{equation*}
D_n(x,y)=\sum\brac{D_n(\omega ,\omega ')\colon\omega\hbox{ contains }x,\ \omega '\hbox{ contains }y}
\end{equation*}
Due to the consistency of $\rho _n$, $D_n(x,y)$ is independent of $n$ if $n\ge\ab{x},\ab{y}$. Also $D_n(x,y)$, $x,y\in Q_n$, are the components of a positive semidefinite matrix. Moreover, if
\begin{equation*}
A_x=\brac{\omega\in\Omega _n\colon\omega\hbox{ contains }x}
\end{equation*}
then we define the $q$-measure $\mu _n(x)$ of $x$ by
\begin{equation*}
\mu _n(x)=\mu _n\sqbrac{\rmcyl (A_x)}=D_n(x,x)
\end{equation*}

We think of $Q_m$ as an analogue of a differentiable manifold and $D_n(x,y)$ as an analogue of a metric tensor.
For $\omega ,\omega '\in\Omega _n$ we define the \textit{covariant bidifference operator}
$\nabla _{\omega ,\omega '}^n\colon L_n\to L_n$ \cite{gud122} by
\begin{align*}
\nabla _{\omega ,\omega '}^nf(x,y)
  =&\sqbrac{D_n(\omega _{\ab{x}-1},\omega '_{\ab{y}-1})f(x,y)-D_n(x,y)f(\omega _{\ab{x}-1},\omega '_{\ab{y}-1})}\\
  &\quad\cdot\delta _{x,\omega _{\ab{x}}}\delta _{y,\omega '_{\ab{y}}}
\end{align*}
In analogy to the curvature operator on a manifold, we define the \textit{discrete curvature operator}
$\rscript _{\omega ,\omega '}^n\colon L_n\to L_n$ by
\begin{equation*}
\rscript _{\omega ,\omega '}^n=\nabla _{\omega ,\omega '}^n-\nabla _{\omega ',\omega}^n
\end{equation*}
We also define the \textit{discrete metric operator} $\dscript _{\omega ,\omega '}^n$ on $L_n$ by
\begin{align*}
\dscript&_{\omega ,\omega '}^nf(x,y)\\
  &=D_n(x,y)\sqbrac{f(\omega '_{\ab{x}-1},\omega _{\ab{y}-1})\delta _{x,\omega '_{\ab{x}}}\delta _{y,\omega _{\ab{y}}}
  -f(\omega _{\ab{x}-1},\omega '_{\ab{y}-1})\delta _{x,\omega _{\ab{x}}}\delta _{y,\omega '_{\ab{y}}}}
  \end{align*}
and the discrete mass-energy operator $\tscript _{\omega ,\omega '}^n$ on $L_n$ by
\begin{align*}
\tscript&_{\omega ,\omega '}^nf(x,y)\\
  &=\sqbrac{D_n(\omega _{\ab{x}-1},\omega '_{\ab{y}-1})\delta _{x,\omega _{\ab{x}}}\delta _{y,\omega '_{\ab{y}}}
  -D_n(\omega '_{\ab{x}-1},\omega _{\ab{y}-1})\delta _{x,\omega '_{\ab{x}}}\delta _{y,\omega _{\ab{y}}}}f(x,y)
  \end{align*}
It is not hard to show that

\begin{equation}         
\label{eq51}
\rscript _{\omega ,\omega '}^n=\dscript _{\omega ,\omega '}^n+\tscript _{\omega ,\omega '}^n
\end{equation}
We call \eqref{eq51} the \textit{discrete Einstein equation} \cite{gud122}

If we can find $D_n(\omega ,\omega ')$ such that the classical Einstein equation is an approximation to \eqref{eq51}, then it would give information about $D_n(\omega ,\omega ')$. Moreover, an important problem in discrete quantum gravity theory is how to test whether general relativity is a close approximation to the theory. Whether Einstein's equation is an approximation to \eqref{eq51} would provide such a test.

As with the classical Einstein equation \eqref{eq51} is difficult to analyze. We obtain a simplification by considering the
\textit{contractive discrete curvature}, \textit{metric} and \textit{mass-energy operators}
$\rhat _{\omega ,\omega '}^n,\dhat _{\omega ,\omega '}^n,\that _{\omega ,\omega '}^n\colon L_n\to K_n$, respectively, given by
$(\rhat _{\omega ,\omega '}^nf)(x)=\rscript _{\omega ,\omega '}^nf(x,x)$,
$(\dhat _{\omega ,\omega '}^nf)(x)=\dscript _{\omega ,\omega '}^nf(x,x)$,
$(\that _{\omega ,\omega '}^nf)(x)=\tscript _{\omega ,\omega '}^nf(x,x)$.
We then have the \textit{contracted discrete Einstein equation}
\begin{equation*}
\rhat _{\omega ,\omega '}^n=\dhat _{\omega ,\omega '}^n+\that _{\omega ,\omega '}^n
\end{equation*}
where
\begin{align*}
\dhat _{\omega ,\omega '}^nf(x)
  &=\mu _n(x)\sqbrac{f(\omega '_{\ab{x}-1},\omega _{\ab{x}-1})-f(\omega _{\ab{x}-1},\omega '_{\ab{x}-1}}\\
  &\quad\cdot\delta _{x,\omega '_{\ab{x}}}\delta _{x,\omega _{\ab{x}}}\\
  \that _{\omega ,\omega '}^nf(x)
  &=2i\rmim D_n(\omega _{\ab{x}-1},\omega '_{\ab{x}-1})\delta _{x,\omega _{\ab{x}}}\delta _{x,\omega '_{\ab{x}}}f(x,x)
\end{align*}

Any $f\in L_n$ can be decomposed into a sum of its symmetric and antisymmetric parts: $f=f_s+f_a$ where
\begin{align*}
f_s(x,y)&=\frac{f(x,y)+f(y,x)}{2}\\\noalign{\smallskip}
f_a(x,y)&=\frac{f(x,y)-f(y,x)}{2}
\end{align*}
and $f_s(x,y)=f_s(y,x)$, $f_a(x,y)=-f_a(y,x)$. We then obtain the simpler forms
\begin{align*}
\dhat _{\omega ,\omega '}^nf(x)
  &=\mu _n(x)f_a(\omega '_{\ab{x}-1},\omega _{\ab{x}-1})\delta _{x,\omega '_{\ab{x}}}\delta _{x,\omega _{\ab{x}}}\\
  \that _{\omega ,\omega '}^nf(x)
  &=2i\rmim D_n(\omega _{\ab{x}-1},\omega '_{\ab{x}-1})\delta _{x,\omega _{\ab{x}}}\delta _{x,\omega '_{\ab{x}}}f_s(x,x)
\end{align*}

\end{document}